\documentclass[11pt]{article}

\usepackage[preprint]{acl}

\usepackage{times}
\usepackage{latexsym}

\usepackage[T1]{fontenc}

\usepackage[utf8]{inputenc}

\usepackage{microtype}

\usepackage{inconsolata}

\usepackage{hyperref}
\usepackage{url}
\usepackage{booktabs}
\usepackage{amsmath,amssymb}
\usepackage{graphicx}
\graphicspath{{../}}
\usepackage{subcaption}
\usepackage[ruled,vlined]{algorithm2e}
\usepackage{bbm}
\usepackage{comment}
\usepackage{multirow}

\usepackage{placeins}

\usepackage{amsthm}
\newtheorem{proposition}{Proposition}
\theoremstyle{remark}

\newcommand{\hypothesisbox}[1]{%
\par\smallskip
\noindent\begingroup
\setlength{\fboxsep}{4pt}%
\begin{minipage}{\columnwidth}
\colorbox{black}{%
\parbox{\dimexpr\linewidth-2\fboxsep\relax}{%
\color{white}\bfseries\small Analysis Hypotheses}}%
\par\nointerlineskip
\fcolorbox{black}{gray!6}{%
\parbox{\dimexpr\linewidth-2\fboxsep-2\fboxrule\relax}{\small #1}}%
\end{minipage}%
\endgroup
\par\smallskip}

\title{From Positionwise Confidence to Prefix Scheduling: Verifier Skipping in Speculative Decoding}

\author{
  Haoxuan Luo, Jameson Sandler, Ferdinando Fioretto \\
  University of Virginia \\
  \texttt{\{ayr7tb,jmz4ds,fioretto\}@virginia.edu}
}

\begin{document}
\maketitle
\begin{abstract}
Speculative decoding is a leading technique to reduce the cost of autoregressive generation by using a small drafter to propose several tokens, which are then verified in parallel by a larger target model. Speculative diffusion decoding (SDD) further removes sequential drafting by generating every position in a draft block in parallel with a discrete diffusion model.
However, SDD still invokes the target on every block, leaving verification as a potential bottleneck. This paper recognizes that this creates a new control handle: whether to invoke the verifier at all. Thus, we study \emph{verifier skipping}, a lossy policy that commits a selected draft prefix directly, and ask which confidence signal should schedule it. Interestingly, our study finds that better token predictors need not yield better schedulers: skips require contiguous high-confidence prefixes, while short skips can induce additional drafting rounds. To study this mismatch, we compare raw confidence with learned marginal and conditional survival scores under the same policy, using Strict SDD, lenience, and top-$k$ acceptance as baselines. On HumanEval with DiffuCoder-7B-Instruct and Qwen3-32B, all three confidence signals save $9.6\%$ to $13.5\%$ of verifier calls at the same observed pass@1 as Strict SDD. Surprisingly, raw confidence saves the most; marginal survival has higher positionwise AUROC than raw confidence at most positions, yet neither learned signal dominates online. 
Our analysis shows that verifier skipping is a useful new lossy axis and, surprisingly, its key challenge is prefix scheduling rather than token prediction alone.
\end{abstract}

\section{Introduction}
\label{sec:introduction}

\begin{figure}[t]
\centering
\includegraphics[width=\columnwidth]{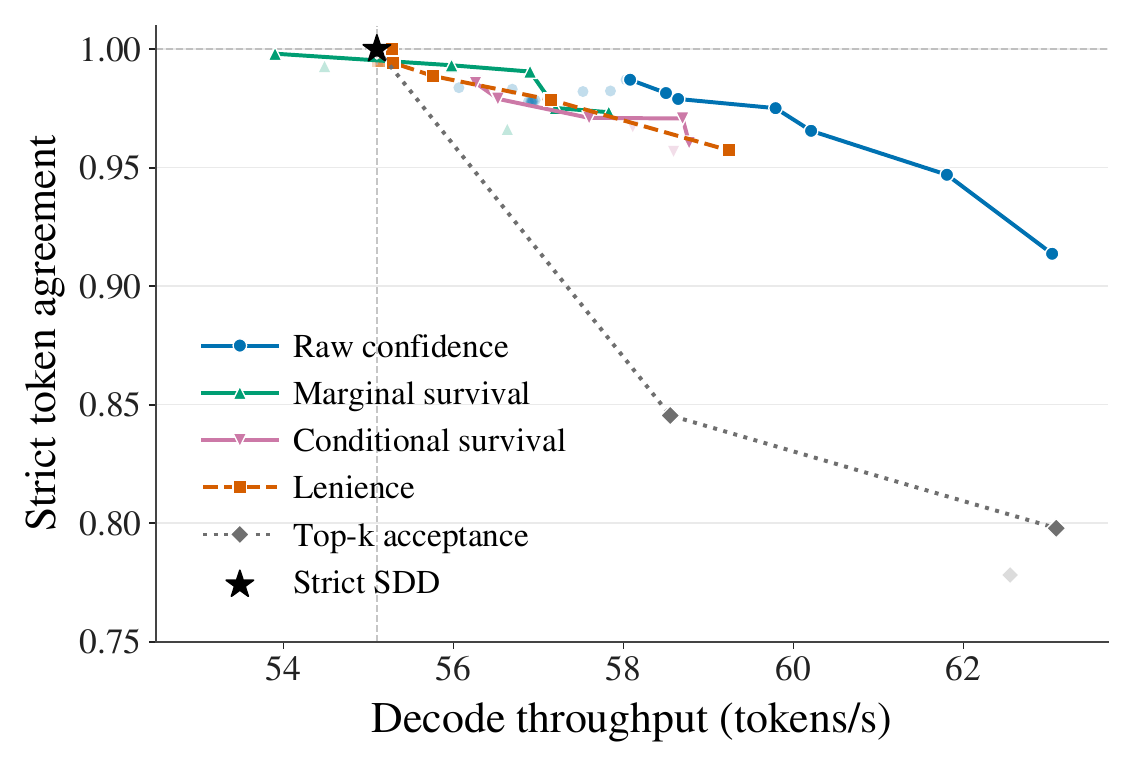}
\caption{
Raw confidence remains competitive with the learned survival signals and preserves higher strict agreement than lenience at comparable throughput. Results use HumanEval with DiffuCoder-7B-Instruct and Qwen3-32B at $\gamma=32$. Each curve shows the Pareto frontier over a fixed parameter grid; faded markers denote settings dominated by another setting of the same method.
}
\label{fig:main_pareto}
\end{figure}

Speculative decoding reduces generation cost by pairing a small drafter, which proposes several tokens, with a larger target model that verifies them in parallel \cite{leviathan2023fast,chen2023accelerating}. Speculative diffusion decoding (SDD) extends this separation by generating every position in a draft block in parallel with a discrete diffusion model, thereby removing the sequential dependency within the drafting stage \cite{christopher2025speculative}. However, this additional parallelism leaves target-side verification unchanged: strict decoding still invokes the target model for every drafted block, even when the target is substantially larger and its forward pass dominates the latency of a decoding round \cite{kumar2025hispec,zhong2025speeding}. This shift in the computational bottleneck motivates \emph{verifier skipping}, a distinct lossy axis that controls whether the target should be invoked for the current block, rather than changing only how tokens are drafted or accepted. Skipping a call saves verification work by committing a drafted prefix directly, but it may also commit a token that strict verification would reject; consequently, the central challenge is to determine when a sufficiently reliable prefix is worth that risk.

Confidence provides a natural basis for this decision because the scheduler must act before invoking the target and can therefore rely only on information available from the drafter. Selective prediction and model routing use confidence for a related purpose, that is deciding whether a model should return an output or defer to a more expensive system \cite{pmlr-v97-geifman19a,gu-hopkins-2023-evaluation,huang2025confidence,lee2026confidence}. This connection motivates evaluating candidate signals offline through positionwise binary cross entropy (BCE) and area under the receiver operating characteristic curve (AUROC). These metrics characterize local predictive quality, but verifier skipping is not a collection of independent token decisions: the scheduler can commit only a contiguous prefix, so high scores that occur at isolated positions may provide no usable skip. Moreover, the online value of a feasible prefix depends on its length, since a short skip may avoid one verifier call while requiring additional diffusion rounds to complete the same output. 

The paper thus poses two hypotheses about the relationship between offline prediction and online scheduling:
\hypothesisbox{
\textbf{H1. Offline prediction $\not\Rightarrow$ online scheduling.}
Better positionwise BCE or AUROC need not produce a better online frontier because skipping requires contiguous prefixes.

\smallskip
\textbf{H2. Fewer calls $\not\Rightarrow$ higher throughput.}
Short skips can add drafting rounds, so fewer verifier calls need not yield higher throughput.}

To test these hypotheses, we hold the prefix policy fixed and vary only the confidence signal: raw drafter confidence, marginal survival, or conditional survival. A shadow verifier audits every skipped round without affecting generation, allowing us to compare task quality, strict agreement, verifier calls, and throughput on HumanEval with DiffuCoder-7B-Instruct and Qwen3-32B \cite{gong2025diffucoder,yang2025qwen3technicalreport,chen2021evaluatinglargelanguagemodels}. 

\noindent\textbf{Contributions.} This paper makes three contributions. First, it introduces verifier skipping as a new lossy axis for SDD and formulates it as a prefix-scheduling problem. Second, it combines a proposition, a permutation diagnostic, and a minimum-skip-length analysis to explain why better positionwise metrics or fewer verifier calls need not improve online scheduling. Finally, it evaluates raw confidence and two learned survival signals against Strict SDD, lenience, and top-$k$ acceptance \cite{leviathan2023fast}. As Figure~\ref{fig:main_pareto} shows, the confidence-guided policies reduce verifier calls by $9.6\%$ to $13.5\%$ at the same observed pass@1 as Strict SDD, yet the learned signals do not consistently improve the online frontier over raw confidence. These contributions establish that effective verifier skipping depends on feasible prefixes and decoding dynamics, not token prediction alone.

\section{Related Work}
\label{sec:related_work}

\noindent\textbf{Speculative diffusion decoding.}
SDD pairs a parallel discrete-diffusion drafter with an autoregressive verifier \cite{christopher2025speculative}. Subsequent methods retain this verifier while improving the drafter: SpecDiff-2 scales drafter--verifier alignment \cite{sandler2025specdiff}, DFlash incorporates target-model information into block-diffusion drafting \cite{chen2026dflash}, and DiffuSpec combines causal path search with adaptive proposal lengths \cite{li2026diffuspec}. A second line of work changes the proposal boundary. SpecDec++ adapts the candidate length and AdaEDL stops drafting using a lower bound on token acceptance \cite{huang2024specdec++,agrawal2024adaedl}, while Speculative Verification and DSpark limit how much of the proposal the target checks \cite{kim-etal-2026-speculative,cheng2026dspark}. DSpark is especially close because prefix survival determines the verification length, but the target is still invoked. Lenience instead relaxes target acceptance \cite{leviathan2023fast}, whereas SpecPV verifies only part of the proposal \cite{tan2025specpv}. Thus, prior methods change drafting, acceptance, or the work performed within verification; verifier skipping changes whether verification occurs at all and may commit a prefix without invoking the target.

\noindent\textbf{Routing target computation.}
Verifier skipping is most closely related to systems that invoke a larger model conditionally. BiLD uses drafter confidence to trigger an autoregressive large decoder \cite{kim2023speculative}, while SPRINTER uses predicted rejection to schedule approximate verification \cite{zhong2025speeding}. S2D2 brings conditional verification to diffusion language models but uses one block-diffusion model for both drafting and routed autoregressive verification \cite{han2026s2d2}. In device--edge inference, U-HLM uses local uncertainty to invoke a remote language model \cite{oh2025uncertaintyawarehybridinferenceondevice}, whereas CoVSpec uses a probability margin to coordinate device--edge vision-language inference \cite{jia2026covspecefficientdeviceedgecoinference}. These methods motivate conditional target computation, but BiLD, SPRINTER, U-HLM, and CoVSpec make tokenwise decisions with autoregressive drafters, while S2D2 does not separate the drafter and target. Our setting instead selects a contiguous prefix from a separate parallel diffusion drafter without the target's current output, making prefix feasibility central to the scheduling decision.


\noindent\textbf{Confidence-based deferral.}
Selective prediction studies when a model should abstain and how such decisions should be evaluated in NLP \cite{pmlr-v97-geifman19a,gu-hopkins-2023-evaluation}. Confidence has also been used to decide whether to answer, which model should handle a query, or whether to continue a cascade \cite{machcha-etal-2025-large,su2026cp,soiffer-etal-2025-semantic,shen-etal-2025-sater,ong2025routellmlearningroutellms,chen2023frugalgptuselargelanguage}. Uncertainty and calibration estimates can vary across tasks and estimators, and entropy alone may be insufficient for safe selective prediction \cite{tao2025revisitinguncertaintyestimationcalibration,phillips2026entropyinsufficientsafeselective}. Verifier skipping instead repeatedly decides whether to commit a contiguous prefix, so its behavior also depends on prefix length and the drafting rounds that follow.

\section{Problem Setup}
\label{sec:problem_setup}

\subsection{Strict Decoding and Verifier Skipping}

Speculative decoding accelerates autoregressive generation by pairing a drafter with a larger verifier \cite{leviathan2023fast}. Given the current prefix $x$, the drafter proposes candidate tokens, which the verifier processes under a rule that preserves the target model's decoding distribution.

SDD retains this separation but generates the draft block $\hat{x}_{1:\gamma}=(\hat{x}_1,\ldots,\hat{x}_\gamma)$ in parallel \cite{christopher2025speculative}. The autoregressive verifier accepts a prefix of the block, whose length we denote by $L\in\{0,\ldots,\gamma\}$. Strict SDD invokes the verifier for every block before committing the resulting output under the standard rule. Because the verifier is substantially larger than the drafter in our setting, this mandatory call accounts for much of each round's cost.

Verifier skipping turns this mandatory call into a scheduling decision. Let $K\in\{0,\ldots,\gamma\}$ denote the committed skip length: $K=0$ executes the standard strict round, whereas $K>0$ commits $\hat{x}_{1:K}$ without calling the verifier. Comparing $K$ with $L$ identifies when a skip is lossy, since $K>L$ means that the committed prefix contains at least one token that strict verification would reject. This comparison also provides the basis for measuring agreement.

\subsection{Agreement with Strict Verification}
\label{subsec:shadow_audit}

The comparison between $K$ and $L$ motivates two complementary agreement measures. Let $\mathcal A$ be the set of rounds in which a relaxed rule commits a nonempty draft prefix. For each $b\in\mathcal A$, let $K_b$ denote the committed prefix length and $L_b$ the length that strict verification would accept from the same draft block.

Because $L_b$ is unobserved on skipped rounds, a shadow verifier recovers it from the same prefix and draft block. Its output is used only for agreement: it does not affect decoding, and the call is excluded from latency and verifier call counts. Lenience and top-$k$ acceptance already invoke the target, so their $L_b$ is obtained from that call. In each case, the comparison remains local to the current block rather than a separately generated Strict SDD sequence.

The first metric, strict token agreement, measures the fraction of committed tokens that strict verification would accept:
\begin{equation}
\mathrm{StrictToken}
=
\frac{
\sum_{b\in\mathcal A}\min(K_b,L_b)
}{
\sum_{b\in\mathcal A}K_b
}.
\label{eq:strict_token_agreement}
\end{equation}
Because the metric is token weighted, it gives partial credit when strict verification would reject only a suffix of the committed prefix. To distinguish such partially valid commits from prefixes accepted in full, we also report full prefix acceptance:
\begin{equation}
\mathrm{FullPrefix}
=
\frac{1}{|\mathcal A|}
\sum_{b\in\mathcal A}
\mathbbm{1}\{K_b\leq L_b\}.
\label{eq:full_prefix_acceptance}
\end{equation}
It requires every committed token to pass strict verification; both metrics equal one for Strict SDD.

\section{Confidence-Guided Verifier Skipping}
\label{sec:skip_policy}

Figure~\ref{fig:method_overview} summarizes the decoding loop: for each parallel draft block, the policy uses drafter-side confidence to select a candidate prefix, then either commits it or executes a strict round.

\begin{figure*}[t]
\centering
\includegraphics[width=\textwidth]{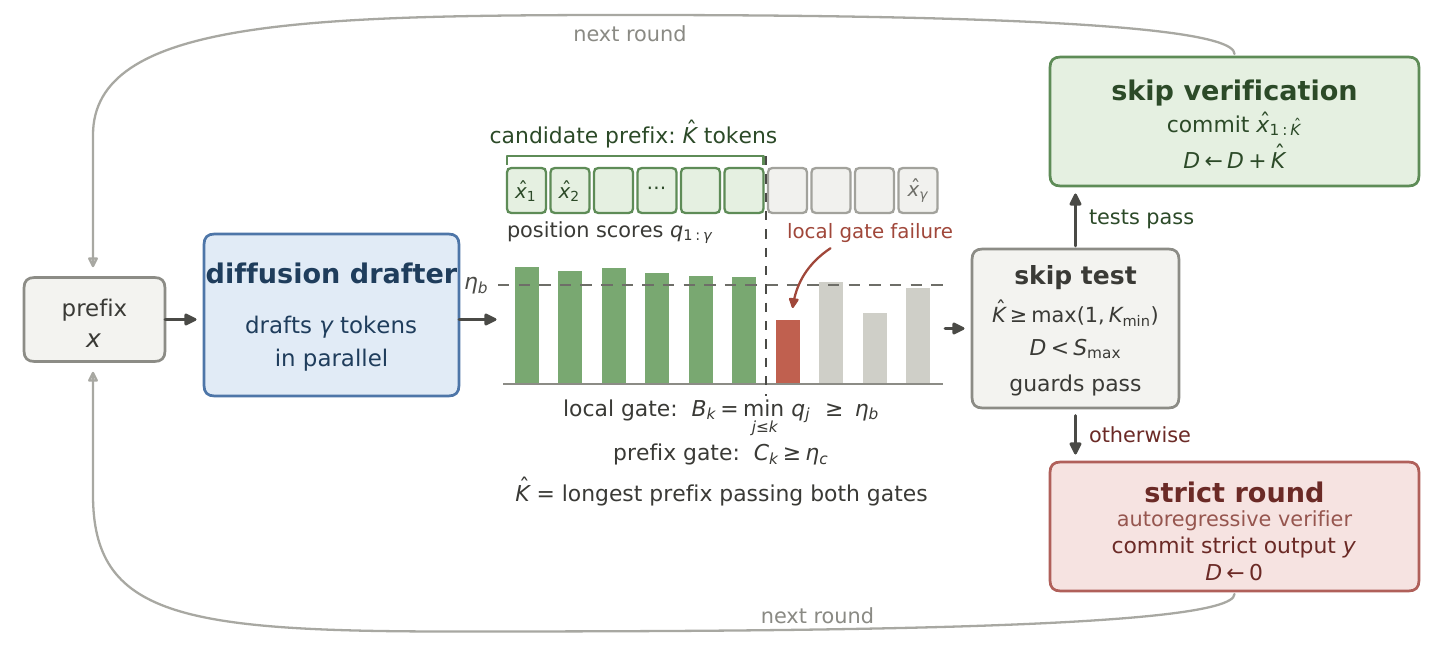}
\caption{
Confidence-guided verifier skipping. The policy selects the longest prefix satisfying both confidence gates and skips verification only when the minimum-length, staleness, and safety conditions also permit it.
}
\label{fig:method_overview}
\end{figure*}

\subsection{Skip Policy}
\label{subsec:skip_policy}

At each round, the diffusion drafter proposes $\hat{x}_{1:\gamma}$ and a confidence estimator assigns each position a score $q_j\in[0,1]$. Section~\ref{subsec:confidence_signals} defines three interpretations of this score within the same policy.

For a prefix of length $k$, the local score $B_k=\min_{1\leq j\leq k}q_j$ rejects any prefix containing a low-confidence position, while the signal-specific score $C_k=\mathcal C_k(q_{1:k})$ summarizes the prefix as a whole.

Given thresholds $\eta_b$ and $\eta_c$, the policy selects the longest prefix that satisfies both gates:
\begin{equation}
\widehat K
=
\max\!\left\{
k\!\in\!\{1,\ldots,\gamma\}:
B_k\geq\eta_b,\;
C_k\geq\eta_c
\right\},
\label{eq:candidate_skip_length}
\end{equation}
where $\max\varnothing=0$.

Confidence alone does not determine whether an eligible prefix should be committed, because the policy must also control fragmentation and stale target feedback. Let $D$ count the unverified tokens committed since the most recent verifier call, and let $S_{\max}$ denote the staleness threshold. If $D\geq S_{\max}$, the policy forces a strict round; otherwise, it skips only when $\widehat K>0$, $\widehat K\geq K_{\min}$, and all remaining guards pass. The minimum length $K_{\min}$ prevents short skips from fragmenting generation, whereas the staleness rule restores target feedback after enough unverified tokens. The remaining guards cover end-of-sequence and other special tokens, repetition, invalid scores, and shortened final blocks.

When all conditions hold, the policy sets $K=\widehat K$, commits $\hat{x}_{1:K}$, and updates $D\leftarrow D+K$; otherwise, it sets $K=0$ and executes a strict round. Every verifier call resets $D$ to zero. Because staleness is checked before a skip, $D$ may exceed $S_{\max}$ by at most $\gamma-1$, in which case the next round is necessarily strict.

\subsection{Confidence Signals}
\label{subsec:confidence_signals}

To isolate the confidence representation from the scheduler, we evaluate three signals within the same policy. They differ only in the position score $q_j$ and prefix rule $\mathcal C_k$. The two learned variants also share the same per-position multilayer perceptron (MLP) and pre-verification inputs, differing only in their training targets and examples.

The first signal is raw confidence, $q_j^{\mathrm{raw}}=p_{\mathrm{draft}}(\hat{x}_j)$, which under greedy drafting is the largest output probability at position $j$. Because it does not estimate verifier acceptance, we use it as a scheduling heuristic and define $C_k^{\mathrm{raw}}=(\prod_{j=1}^k q_j^{\mathrm{raw}})^{1/k}$. This geometric mean summarizes the prefix, while the local gate detects any low-confidence position.

The second signal, marginal survival, estimates $a_j=\Pr(L\geq j\mid\mathcal F)$ from the pre-verification drafter information $\mathcal F$. It is trained on all blocks using labels $y_j=\mathbbm{1}\{L\geq j\}$. Although the true sequence $a_j$ is nonincreasing, the predictions $\hat a_j$ need not be; we therefore define $\bar a_k=\min_{1\leq j\leq k}\hat a_j$ and set $q_j^{\mathrm{marg}}=\hat a_j$ and $C_k^{\mathrm{marg}}=\bar a_k$. Consequently, $B_k^{\mathrm{marg}}=C_k^{\mathrm{marg}}=\bar a_k$, so both gates reduce to $\eta_m=\max\{\eta_b,\eta_c\}$.

The third signal, conditional survival, estimates $s_j=\Pr(L\geq j\mid L\geq j-1,\mathcal F)$, with $L\geq0$ by convention. Consistent with this at-risk interpretation, positions $j\leq L$ receive positive labels, $j=L+1$ receives a negative label when $L<\gamma$, and later positions are masked. We set $q_j^{\mathrm{cond}}=\hat s_j$ and $C_k^{\mathrm{cond}}=\prod_{j=1}^k\hat s_j$, computing the product in log space. We parameterize $\eta_c=1-\epsilon$, so $\epsilon$ controls the predicted prefix risk.

Because the signals aggregate position scores differently, they also determine whether the local and prefix gates impose distinct constraints. For any $q_{1:k}\in[0,1]^k$,
\begin{equation}
\prod_{j=1}^k q_j
\leq
\min_{j\leq k}q_j
\leq
\left(\prod_{j=1}^k q_j\right)^{1/k}.
\label{eq:gate_order}
\end{equation}
This ordering explains the role of each gate. For raw confidence, the geometric mean can remain high despite one low position, so the local gate adds a distinct constraint. For marginal survival, both gates equal $\bar a_k$ and are therefore identical. For conditional survival, the prefix gate implies the local gate whenever $\eta_c\geq\eta_b$; only when $\eta_b>\eta_c$ can the local gate impose an additional restriction.

\subsection{From Positionwise Prediction to Prefix Scheduling}
\label{subsec:representation_analysis}

\paragraph{Survival semantics.}

Marginal and conditional survival encode the same prefix event at different levels of factorization. In particular,
\begin{equation}
\begin{aligned}
a_K
&=
\Pr(L\geq K\mid\mathcal F) = \prod_{j=1}^K s_j,
\\
s_j
&=
\frac{a_j}{a_{j-1}}
\quad\text{when }a_{j-1}>0,
\qquad a_0=1.
\end{aligned}
\label{eq:survival_equivalence}
\end{equation}
The factorization follows directly from the chain rule and requires no independence assumption. Exact marginal and conditional probabilities therefore contain the same information about prefix survival, even though separately trained predictors may behave differently because their targets and training populations differ.

This equivalence also clarifies how the scores should be aggregated. Only conditional survival factors should be multiplied, because each marginal $a_j$ already includes survival through all earlier positions. For example, if the conditional failure probability is $h$ at every position, then $a_K=(1-h)^K$, whereas multiplying the marginals yields $\prod_{j=1}^K a_j=(1-h)^{K(K+1)/2}$ and therefore counts earlier survival events repeatedly.

The cumulative survival score further provides a direct risk interpretation. Suppose that $\widehat C_k\leq\Pr(L\geq k\mid\mathcal F)$ for every $k$, and that $\widehat K$ is selected using only information in $\mathcal F$. Then
\begin{equation}
\widehat C_{\widehat K}\geq 1-\epsilon
\quad\Longrightarrow\quad
\Pr(L\geq\widehat K\mid\mathcal F)\geq 1-\epsilon.
\label{eq:risk_interpretation}
\end{equation}
Under these assumptions, every selected prefix has conditional probability at least $1-\epsilon$ of being fully accepted. Our learned scores are not certified lower bounds, so this statement provides an interpretation rather than a formal guarantee; raw confidence has no analogous survival semantics.

\paragraph{Positionwise metrics and prefix events.}

Positionwise predictive quality need not determine scheduling quality. Positionwise metrics score each draft position independently, whereas the policy requires several consecutive leading positions to pass together. Thus, identical positionwise behavior can yield different prefix frequencies, as the following proposition formalizes.

\begin{proposition}[Positionwise metrics do not determine prefix frequency]
\label{prop:prefix_frequency}
Fix $1 \leq K\leq\gamma$, $\rho\in(0,1)$, a threshold $\tau\in(0,1)$, and any distribution of $L$ with $\Pr(L\geq K)>0$. There exist two score processes $q_{1:\gamma}$ and $q'_{1:\gamma}$ taking values in $(0,1)$ such that $(q_j,\mathbbm{1}\{L\geq j\})$ and $(q'_j,\mathbbm{1}\{L\geq j\})$ have the same joint distribution at every position $j$, while
\begin{align*}
\Pr\left(\min_{j\leq K}q_j\geq\tau\right)
&=
\Pr(L\geq K)\rho,
\\
\Pr\left(\min_{j\leq K}q'_j\geq\tau\right)
&=
\Pr(L\geq K)\rho^K.
\end{align*}
\end{proposition}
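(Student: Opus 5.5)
We build both score processes on one probability space carrying $L$ and an auxiliary randomness independent of $L$. Each score takes only two values: a "high" value $h\in[\tau,1)$ and a "low" value $\ell\in(0,\tau)$. Both exist because $\tau\in(0,1)$. The construction makes each score a deterministic function of the event $\{L\geq j\}$ and an independent Bernoulli$(\rho)$ coin. The two processes use the same positionwise coin law but different coupling across positions.

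\emph{Positionwise law.} At position $j$, both processes are built to satisfy
\[
q_j = h \text{ with probability } \rho \text{ and } \ell \text{ otherwise, given } L\geq j,
\qquad
q_j=\ell \text{ given } L<j.
\]
Then the pair $(q_j,\mathbbm{1}\{L\geq j\})$ has a law determined by $\Pr(L\geq j)$ and $\rho$ alone. Hence any two processes with this conditional structure agree in joint distribution at every position. This already gives the first requirement, and it holds for every $j\leq\gamma$, including $j>K$.

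\emph{Coupled process.} Draw a single coin $U\sim\mathrm{Bernoulli}(\rho)$ independent of $L$ and set $q_j=h\,\mathbbm{1}\{L\geq j,U=1\}+\ell\,(1-\mathbbm{1}\{L\geq j,U=1\})$. Because $L\geq K$ implies $L\geq j$ for all $j\leq K$, we get $\min_{j\leq K}q_j\geq\tau$ exactly on $\{L\geq K\}\cap\{U=1\}$. Its probability is $\Pr(L\geq K)\rho$ by independence. Conversely, if $L<K$ then $q_{L+1}=\ell<\tau$, so no other outcomes contribute.

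\emph{Independent process.} Draw i.i.d.\ coins $U_1,\dots,U_\gamma\sim\mathrm{Bernoulli}(\rho)$ independent of $L$ and set $q'_j$ as above with $U_j$ in place of $U$. The event $\min_{j\leq K}q'_j\geq\tau$ is then $\{L\geq K\}\cap\{U_1=\dots=U_K=1\}$, which has probability $\Pr(L\geq K)\rho^K$.

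The only delicate point is the definition of the event in the coupled process. It must be exactly $\{L\geq K\}\cap\{U=1\}$, so the low value must apply whenever $L<j$, independent of the coin. This is what makes $\mathbbm{1}\{L\geq j\}$-labelled negatives always fail the gate. The assumption $\Pr(L\geq K)>0$ is needed only so that the two frequencies differ, whenever $K\geq2$.

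Finally, we remark that AUROC and BCE at each position are functionals of the positionwise joint law. The two processes therefore tie on all positionwise metrics while their prefix frequencies differ by a factor $\rho^{K-1}$.
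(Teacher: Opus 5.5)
Your construction is correct and is essentially the paper's proof. Both use two-level scores that are high only when $L\geq j$ and a Bernoulli$(\rho)$ coin succeeds, with a single shared coin in one process and independent per-position coins in the other; the only cosmetic difference is that you allow the high value to equal $\tau$, which is harmless since the gate is $\geq\tau$.
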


See Appendix~\ref{app:proof} for the proof and a discussion of the construction.

\paragraph{Runtime geometry.}

Prefix frequency is also insufficient to determine online efficiency because skip length changes the number of required draft blocks. Let $c_D$ and $c_V$ denote the average costs of one block and one verifier call. Decode time is approximately
\begin{equation}
T_{\mathrm{decode}}
\approx
c_D N_{\mathrm{blk}}
+
c_V N_{\mathrm{ver}}
+
T_{\mathrm{policy}},
\label{eq:runtime_decomposition}
\end{equation}
where $N_{\mathrm{blk}}$ and $N_{\mathrm{ver}}$ count draft blocks and verifier calls, while $T_{\mathrm{policy}}$ captures predictor inference and other policy overhead.

To isolate the local break-even condition, we omit policy overhead and compare a strict round that commits $A$ tokens on average at cost $c_D+c_V$ with a skip of length $K$ at cost $c_D$. The skip has higher local throughput exactly when
\begin{equation}
\begin{aligned}
\frac{K}{c_D}
&>
\frac{A}{c_D+c_V},
\\
K
&>
K^\star
:=
A\frac{c_D}{c_D+c_V}.
\end{aligned}
\label{eq:break_even_length}
\end{equation}
Thus, local throughput improves only when $K>K^\star$, motivating $K_{\min}$: shorter skips may save a verifier call but add enough draft blocks to reduce throughput.

\section{Experiments}
\label{sec:experiments}

\subsection{Experimental Setup}
\label{subsec:experimental_setup}

\begin{figure*}[t]
\centering
\includegraphics[width=0.485\textwidth]
{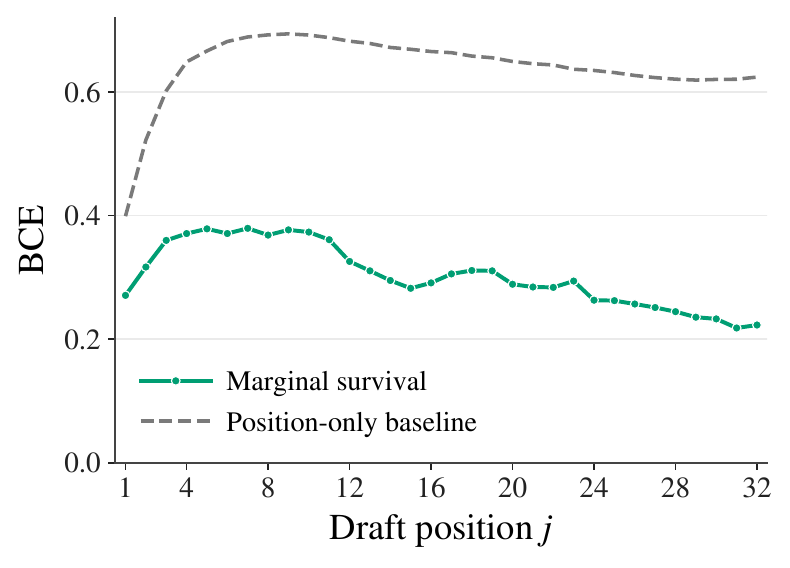}
\hfill
\includegraphics[width=0.485\textwidth]
{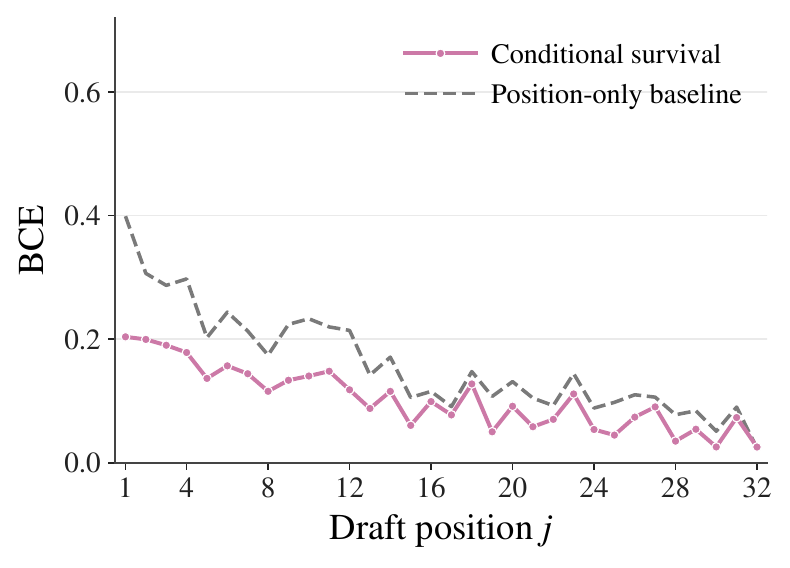}
\caption{
Positionwise BCE for marginal and conditional survival on HumanEval at
$\gamma=32$. Dashed curves show position-only baselines estimated from the
training folds.
}
\label{fig:offline_bce}
\end{figure*}

We pair DiffuCoder-7B-Instruct with Qwen3-32B as the diffusion drafter and autoregressive verifier, respectively, and disable thinking \cite{gong2025diffucoder,yang2025qwen3technicalreport}. We evaluate all 164 HumanEval problems using pass@1 \cite{chen2021evaluatinglargelanguagemodels}. Unless stated otherwise, decoding is greedy, uses two diffusion steps, and generates at most 512 tokens. A preliminary Strict SDD throughput comparison determined $\gamma=32$ before any skipping-policy evaluation; the main experiments then use $K_{\min}=6$ and $S_{\max}=64$. All main runs use two NVIDIA A100 80 GB GPUs.

To separate predictor training from policy evaluation, we partition HumanEval into five folds. Each run uses three folds for training, one for validation, and one for testing, with the roles rotated so every prompt appears exactly once in a test fold. Raw confidence requires no training but is evaluated on the same test folds, and every frontier uses a fixed threshold grid.

This protocol supports complementary offline and online evaluations. Offline prediction is measured by per-position BCE against position-only baselines, with AUROC as a ranking metric. Online efficiency is measured by verifier calls per generated token, normalized by Strict SDD, and by generated tokens per second; strict token agreement follows Section~\ref{subsec:shadow_audit}. Runtime includes the drafter, verifier, predictor, and policy but excludes the shadow audit. All reported results pool the five test folds.

\subsection{Offline Prediction and Online Frontiers}
\label{subsec:offline_online}

\paragraph{H1: Offline prediction $\not\Rightarrow$ online scheduling.}

To test whether offline improvements predict online scheduling, we first measure how well the learned models estimate strict prefix survival at individual draft positions. Marginal survival predicts $\mathbbm{1}\{L\geq j\}$ over all blocks, whereas conditional survival predicts the same event only among blocks with $L\geq j-1$. Figure~\ref{fig:offline_bce} compares both models with the empirical positive rate at each position, estimated from the training folds.

Both learned predictors improve substantially over their position-only baselines: average BCE decreases from $0.641$ to $0.306$ for marginal survival and from $0.195$ to $0.123$ for conditional survival. Conditional estimates become noisier at later positions because fewer blocks remain at risk and provide training examples. Appendix~\ref{app:offline_metrics} complements this comparison with per-position AUROC for the learned predictors and raw confidence, which has slightly higher AUROC at some early positions.

\paragraph{Online scheduling.}

Having established an offline predictive advantage, we next ask whether it transfers online by sweeping the three signals' thresholds while holding all other policy components fixed. Figure~\ref{fig:main_pareto} relates strict token agreement to throughput, while Figure~\ref{fig:online_frontiers} relates pass@1 and agreement to relative verifier calls. Strict SDD provides the common quality and cost reference, with strict token agreement equal to one.

\begin{figure*}[t]
\centering
\centering
\includegraphics[width=0.485\linewidth]{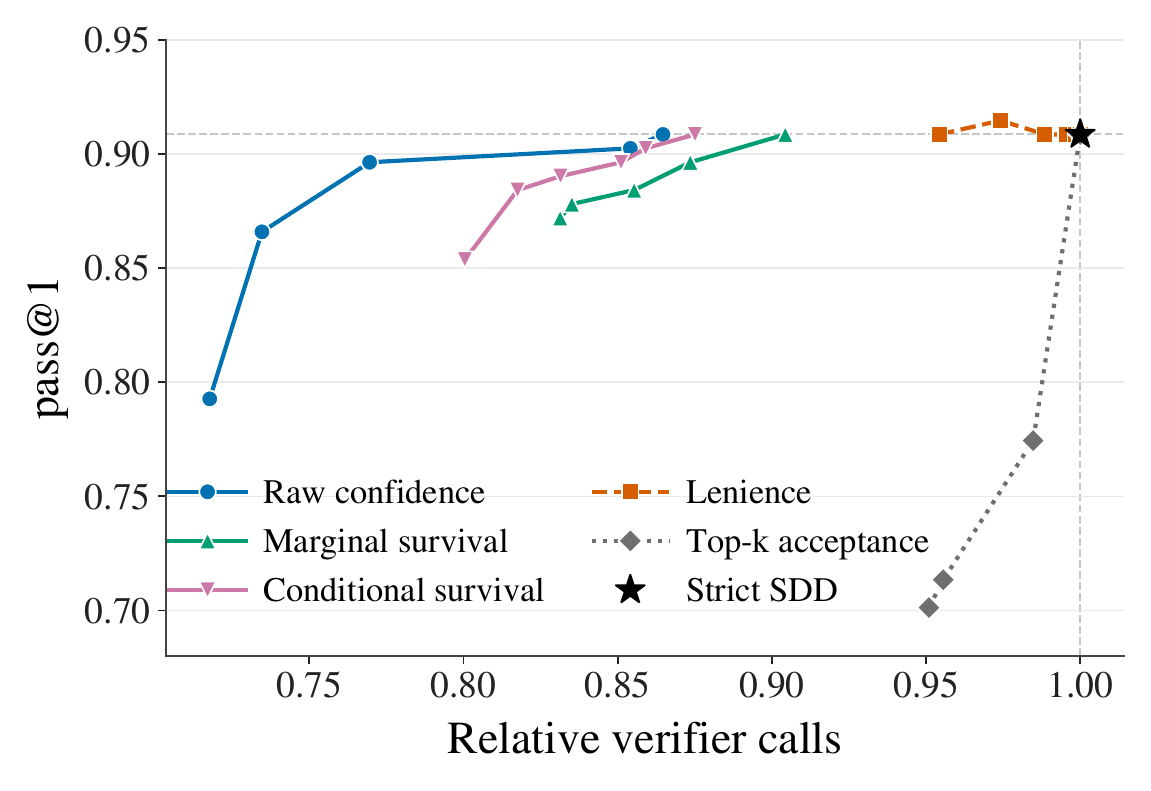}
\hfill
\centering
\includegraphics[width=0.485\linewidth]{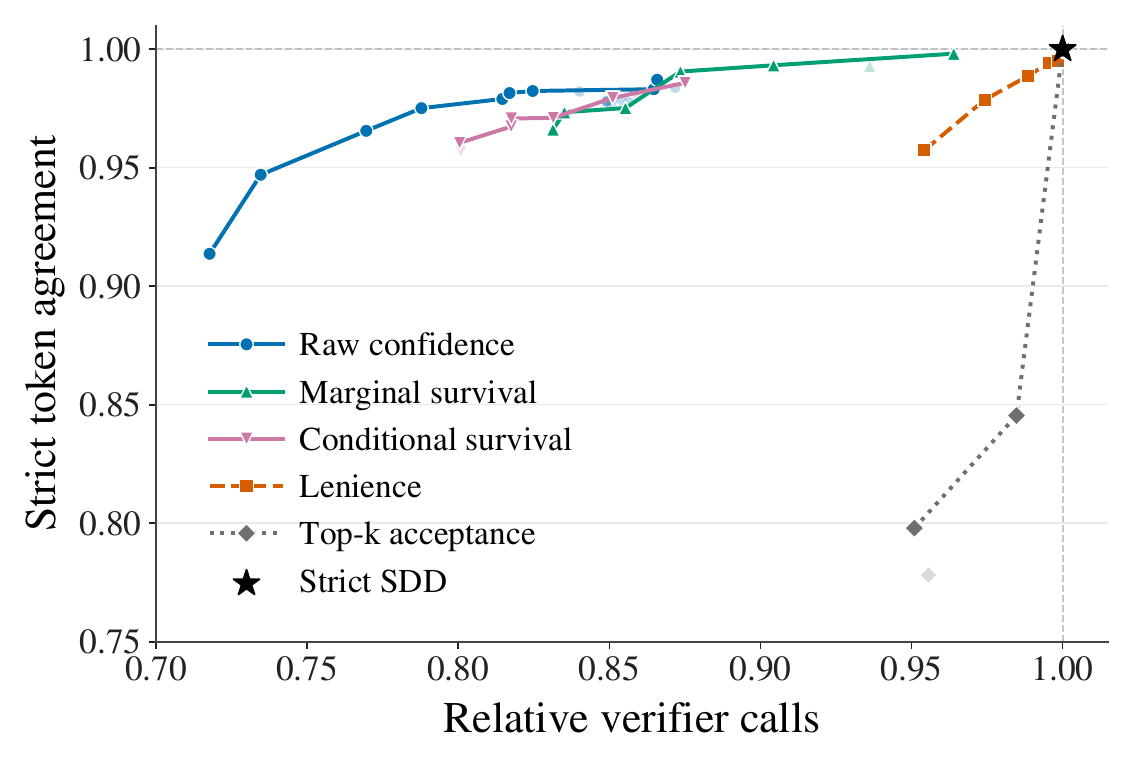}
\caption{
HumanEval quality and agreement with strict decoding versus relative verifier calls at $\gamma=32$. Each curve shows the Pareto frontier over a fixed parameter grid for one method. Faded markers show the remaining settings.
}
\label{fig:online_frontiers}
\end{figure*}

Strict SDD obtains $0.9085$ pass@1. At observed points with the same aggregate pass@1, raw confidence, conditional survival, and marginal survival reduce verifier calls by $13.5\%$, $12.5\%$, and $9.6\%$, respectively. Their full prefix acceptance is $0.9814$, $0.9712$, and $0.9776$, respectively, showing that each policy saves calls while remaining close to strict verification.

The lossy baselines provide additional context for these gains. Lenience still verifies every block, yet its calls per generated token are $4.6\%$ below Strict SDD at the same pass@1, while top-$k$ acceptance provides little call reduction and falls to $0.7744$ pass@1 at $k=2$. A periodic policy that attempts a fixed-length skip once every $P$ rounds without confidence is also dominated by raw confidence in both pass@1 and relative verifier calls at every tested setting; Appendix~\ref{app:periodic} reports these results.

\noindent\textbf{H1 finding and significance.} Despite their offline gains, the learned signals do not dominate online. Raw confidence reaches the lowest call rates in the aggressive region and remains on the frontier near Strict SDD quality; conditional survival uses fewer calls than marginal survival at several high-quality points, whereas marginal survival achieves higher strict token agreement at conservative settings. Thus, no signal dominates across task quality, verifier calls, and strict agreement. This mismatch supports H1: better positionwise prediction is not sufficient for a better scheduling frontier. Importantly, the result shows why BCE or AUROC alone cannot select an online scheduler; it establishes a lack of observed dominance by the learned scores, not universal superiority of raw confidence.

\noindent\textbf{Why the online metrics must be separated.} At a raw-confidence point matching Strict SDD pass@1, throughput rises from $55.1$ to $58.1$ tokens per second as relative verifier calls fall to $0.866$. A second point with the same aggregate pass@1 reaches $59.8$ tokens/s with $0.975$ strict token agreement. Lenience reaches a similar $59.2$ tokens/s, but with $0.954$ relative calls and $0.957$ agreement. Moreover, the $58.1$ tokens/s raw-confidence point has $0.987$ agreement and matches Strict SDD's pass/fail outcome on every problem, whereas lenience gains two passes and loses two. Matching aggregate pass@1 can therefore conceal different call, throughput, and agreement behavior, motivating the H2 analysis below.

\begin{figure*}[t]
\centering
\begin{subfigure}[t]{0.48\textwidth}
\centering
\includegraphics[width=\linewidth]{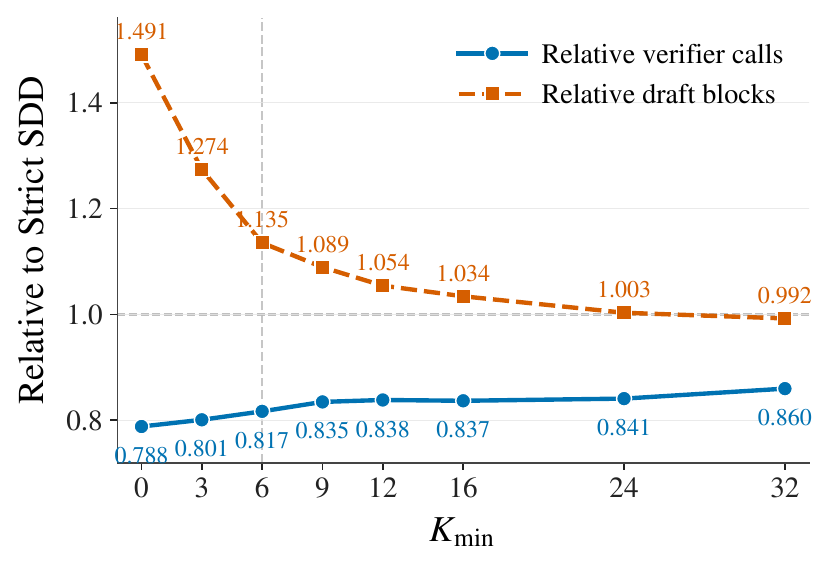}
\end{subfigure}
\hfill
\begin{subfigure}[t]{0.48\textwidth}
\centering
\includegraphics[width=\linewidth]{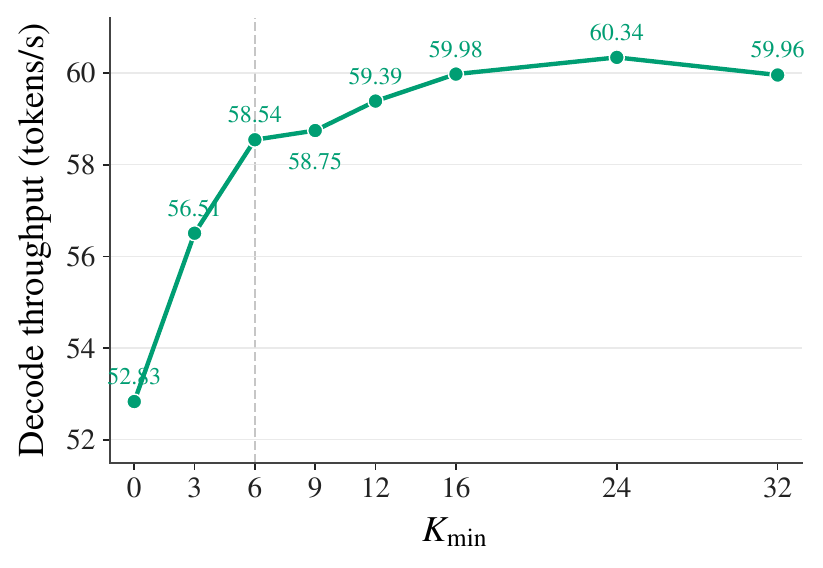}
\end{subfigure}
\caption{
Effect of $K_{\min}$ on verifier calls, draft fragmentation, and throughput for raw confidence. The remaining policy thresholds are fixed, and all points use the same timing setup.
}
\label{fig:kmin_sweep}
\end{figure*}

\subsection{Prefix Geometry and Runtime}

\paragraph{Why H1 arises: dependence within a block.}
The mismatch has a structural source: offline metrics evaluate positions separately, whereas the scheduler can skip only a contiguous prefix. Scheduling therefore depends on whether favorable scores occur together near the beginning of the same block. Proposition~\ref{prop:prefix_frequency} formalizes this gap by showing that positionwise metrics cannot determine how often eligible prefixes occur.

To isolate this dependence empirically, we apply an offline permutation diagnostic to the logged test blocks. At each position $j$, we randomly permute recorded score--label pairs across blocks. Raw confidence and marginal survival use all valid blocks with label $\mathbbm{1}\{L\geq j\}$, while conditional survival uses only blocks with $L\geq j-1$. Moving each pair together preserves every position's score--label distribution, and hence every positionwise metric, while breaking the association among positions from the same block.

We repeat the permutation 100 times at fixed settings: $(\eta_b,\eta_c)=(0.90,0.93)$ for raw confidence, $\eta_m=0.90$ for marginal survival, and $(\eta_b,\eta_c)=(0.90,0.90)$ for conditional survival. Each repetition recomputes the longest prefix satisfying the score gates before $K_{\min}$ or staleness is applied; we call this the feasible prefix length. Because the diagnostic uses only logged blocks, it does not regenerate a decoding trajectory.

\begin{table}[t]
\centering
\caption{
Mean feasible prefix length before and after permutation. Per-position BCE and AUROC are unchanged by construction.
}
\label{tab:permutation}
\small
\setlength{\tabcolsep}{7pt}
\begin{tabular}{@{}lcc@{}}
\toprule
Signal & Original & Permuted mean \\
\midrule
Raw confidence       & 8.75 & 1.59 \\
Marginal survival    & 6.86 & 1.00 \\
Conditional survival & 8.76 & 2.11 \\
\bottomrule
\end{tabular}
\end{table}

\noindent\textbf{Mechanistic evidence for H1.} Table~\ref{tab:permutation} shows a large reduction for every signal: original mean lengths range from $6.86$ to $8.76$, whereas permuted means range from $1.00$ to $2.11$. Because all positionwise metrics are unchanged, this collapse isolates information that BCE and AUROC cannot capture---which positions become jointly available as a prefix. The diagnostic does not reproduce an online trajectory, but it explains one mechanism behind the offline-to-online gap. Appendix~\ref{app:gate_ablation} complements this analysis with a fixed-threshold ablation of the local and prefix gates.

\paragraph{H2: Fewer verifier calls $\not\Rightarrow$ higher throughput.}

Equation~\ref{eq:break_even_length} predicts that short skips may reduce verifier calls without improving throughput. Figure~\ref{fig:kmin_sweep} tests this mechanism by varying $K_{\min}$ while holding the remaining thresholds fixed. Setting $K_{\min}=0$ removes the minimum-length requirement, although a skip still requires $\widehat K>0$. Raw confidence then lowers relative verifier calls to $0.788$, but raises relative draft blocks to $1.491$ and reaches only $52.83$ tokens per second. Increasing $K_{\min}$ to $6$ raises relative calls to $0.817$, yet reduces relative blocks to $1.135$ and increases throughput to $58.54$ tokens per second.

The same pattern continues across the sweep: throughput reaches $60.34$ tokens/s at $K_{\min}=24$ and is slightly lower at $K_{\min}=32$, even as relative verifier calls rise from $0.788$ to $0.860$. Observed pass@1 is $0.8963$ for $K_{\min}\in\{0,3,9\}$, $0.8902$ for $K_{\min}=6$, $0.9024$ for $K_{\min}\in\{12,16\}$, and $0.9085$ for $K_{\min}\in\{24,32\}$.

\noindent\textbf{H2 finding and significance.} Minimizing verifier calls does not maximize throughput, supporting H2 and showing why call reduction alone is an incomplete efficiency objective: short skips can replace one verifier call with enough additional drafting work to slow decoding. We therefore use $K_{\min}=6$ in the main experiments, which removes most additional draft blocks while keeping verifier calls $18.3\%$ below Strict SDD. Higher values reduce fragmentation further but yield smaller throughput gains and fewer call savings.

\section{Conclusion}

We introduced verifier skipping as a distinct lossy axis for SDD and evaluated the same prefix policy with raw confidence, marginal survival, and conditional survival. On HumanEval with DiffuCoder-7B-Instruct and Qwen3-32B, the three signals reach the same observed pass@1 as Strict SDD at operating points that reduce verifier calls per generated token by $9.6\%$ to $13.5\%$. Raw confidence gives the largest reduction, and neither learned signal consistently improves the observed online tradeoff over it. 
The evidence supports both analysis hypotheses: positionwise prediction metrics alone do not capture the prefix structure used by the scheduler, and fewer verifier calls do not necessarily produce higher throughput.

\section*{Limitations}
The main experiments use one drafter and verifier pair on HumanEval under greedy decoding. Appendix~\ref{app:qwen25_transfer} reports earlier transfer results on two additional benchmarks, but we have not tested other current model pairs or decoding settings. Each reported HumanEval point uses one completion per prompt, and matched-pass@1 operating points are identified from the observed test frontier. Equality in aggregate pass@1 on 164 problems does not establish statistical quality equivalence; our results show no observed online dominance by the learned signals, not a universal ordering. Throughput is measured using two GPUs with full sequence recomputation and no verifier KV cache, so the throughput results may differ in cached serving systems. Each prompt and setting is timed once, so small differences in tokens/s may be due to timing noise. The learned survival scores are not certified lower bounds and therefore provide no formal guarantee of agreement with strict decoding. The predictors are trained only on Strict SDD trajectories and may encounter distribution shift after earlier verifier skips. We evaluate only one MLP architecture trained with positionwise objectives.

\bibliography{custom}

\appendix

\section{Additional Experimental Details}
\label{app:experimental_details}

\subsection{Additional Analysis Details}
\label{app:proof}

\begin{proof}
Let $Y_j=\mathbbm{1}\{L\geq j\}$ and choose score levels $0<q_{\mathrm{lo}}<\tau<q_{\mathrm{hi}}<1$. Draw $Z\sim\operatorname{Bernoulli}(\rho)$ and independent variables $Z_1,\ldots,Z_\gamma\sim\operatorname{Bernoulli}(\rho)$, all independently of $L$. Define
\[
q_j
=
q_{\mathrm{lo}}
+
(q_{\mathrm{hi}}-q_{\mathrm{lo}})Y_jZ
\]
\[
q'_j
=
q_{\mathrm{lo}}
+
(q_{\mathrm{hi}}-q_{\mathrm{lo}})Y_jZ_j.
\]
At any fixed position, both scores equal $q_{\mathrm{lo}}$ when $Y_j=0$; when $Y_j=1$, each equals $q_{\mathrm{hi}}$ with probability $\rho$ and $q_{\mathrm{lo}}$ otherwise. Their joint distributions with $Y_j$ are therefore identical. Their dependence across positions is different, however: the first $K$ scores in the first process exceed $\tau$ when $L\geq K$ and $Z=1$, whereas the second process additionally requires $Z_1=\cdots=Z_K=1$. These events have the two stated probabilities.
\end{proof}

The construction isolates the dependence that positionwise metrics cannot observe. The shared variable $Z$ makes high scores occur together in the first process, while the independent variables $Z_j$ disperse them across blocks in the second. Because the positionwise joint distributions remain identical, both processes receive the same value under any positionwise metric, including BCE and AUROC wherever defined. Nevertheless, a high-confidence prefix of length $K$ occurs $\rho^{1-K}$ times as often in the first process. Thus, even when every selected prefix is fully accepted, positionwise metrics do not determine how often the scheduler encounters an eligible prefix.

\subsection{Offline AUROC}
\label{app:offline_metrics}

Figure~\ref{fig:offline_auroc} compares the learned survival scores with raw confidence using positionwise AUROC. Marginal survival performs better across most positions, while conditional survival and raw confidence are closer.

\begin{figure*}[t]
\centering
\includegraphics[width=0.485\textwidth]
{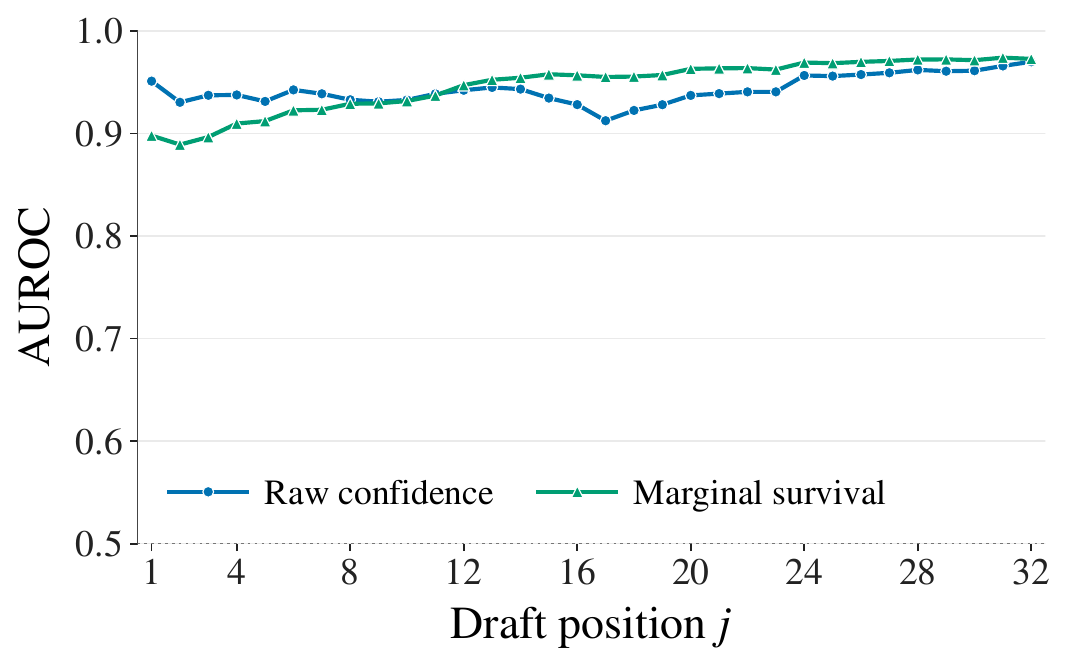}
\hfill
\includegraphics[width=0.485\textwidth]
{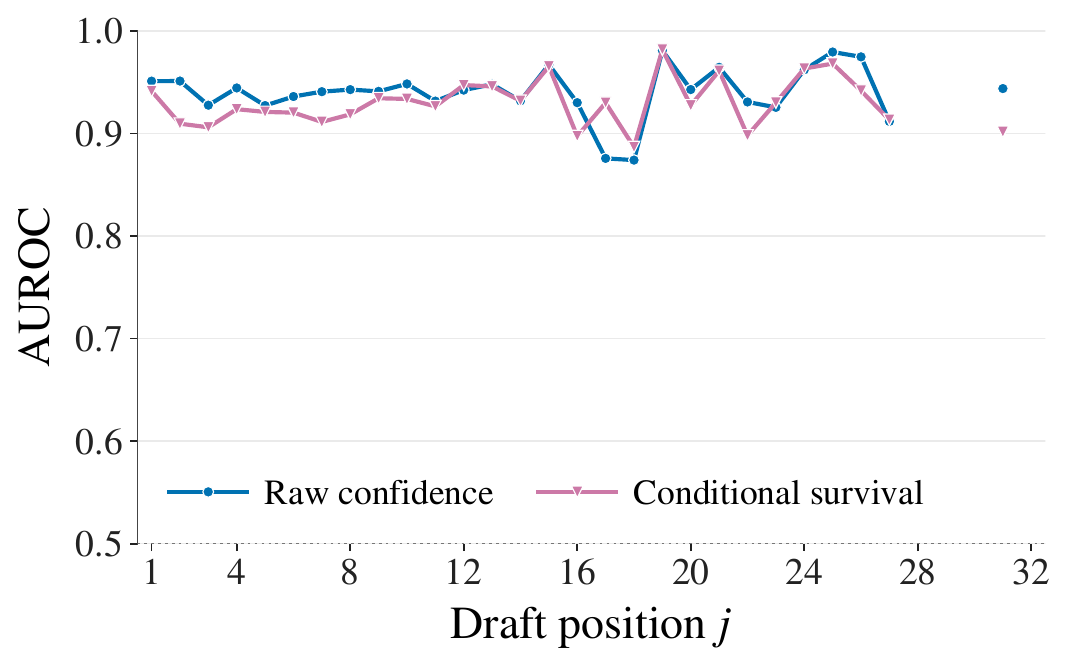}
\caption{
Positionwise AUROC on the held-out HumanEval folds at $\gamma=32$. The conditional evaluation includes positions with at least ten positive and ten negative examples.
}
\label{fig:offline_auroc}
\end{figure*}

\subsection{Gate Ablation}
\label{app:gate_ablation}

Table~\ref{tab:gate_ablation} reports a fixed threshold ablation of the policy gates. We use settings fixed before inspecting the Qwen3 test results: $(\eta_b,\eta_c)=(0.90,0.93)$ for raw confidence and $(\eta_b,\eta_c)=(0.90,0.90)$ for conditional survival. Each variant removes one gate while leaving the remaining settings unchanged. We do not retune the variants because the goal is to identify the active constraint at each setting. We omit marginal survival because its local and prefix gates are equivalent.

\begin{table*}[t]
\centering
\caption{
Fixed threshold gate ablation on HumanEval at $\gamma=32$. Variants are not retuned. Calls and blocks are normalized by Strict SDD.
}
\label{tab:gate_ablation}
\small
\setlength{\tabcolsep}{6pt}
\begin{tabular}{@{}llcccc@{}}
\toprule
Signal & Policy & Pass@1 & Tokens/s & Relative calls & Relative blocks \\
\midrule
\multirow{3}{*}{Raw confidence}
& Full           & 0.8902 & 58.54 & 0.8170 & 1.135 \\
& Local only     & 0.8902 & 58.24 & 0.8146 & 1.135 \\
& Geometric only & 0.8598 & 60.82 & 0.7659 & 1.118 \\
\midrule
\multirow{3}{*}{\shortstack[l]{Conditional\\survival}}
& Full         & 0.8841 & 58.11 & 0.8175 & 1.137 \\
& Product only & 0.8841 & 58.11 & 0.8175 & 1.137 \\
& Local only   & 0.7805 & 58.96 & 0.7896 & 1.130 \\
\bottomrule
\end{tabular}
\end{table*}

For raw confidence, the local-only policy closely matches the full policy, whereas the geometric-only policy lowers pass@1 from $0.8902$ to $0.8598$. The local gate is therefore the main constraint at this setting. For conditional survival, the product-only policy is identical to the full policy, while the local-only policy reduces pass@1 to $0.7805$. Since $\eta_c=\eta_b=0.9$, passing the product gate also passes the local gate. This redundancy is specific to the chosen thresholds; other choices can make both gates active, as discussed in Section~\ref{subsec:confidence_signals}. 

\subsection{Periodic Skipping Baseline}
\label{app:periodic}

We evaluate a confidence-free baseline that attempts a skip of $K=6$ tokens once every $P$ rounds. All other rounds use strict decoding, and the same safety guards are checked before each scheduled skip.

\begin{table}[t]
\centering
\caption{
Periodic skipping on HumanEval at $\gamma=32$. Relative calls are normalized by Strict SDD, and agreement denotes strict token agreement. Shadow verification is excluded from calls and throughput.
}
\label{tab:periodic}
\small
\setlength{\tabcolsep}{3pt}
\begin{tabular}{@{}lcccc@{}}
\toprule
$P$ & Pass@1 & Tokens/s & Rel. calls & Agreement \\
\midrule
3 & 0.5854 & 63.42 & 0.7939 & 0.7417 \\
4 & 0.6768 & 60.50 & 0.8502 & 0.7542 \\
8 & 0.7256 & 56.80 & 0.9540 & 0.6989 \\
\bottomrule
\end{tabular}
\end{table}

Each periodic setting is dominated by a point from the raw-confidence sweep in both pass@1 and relative verifier calls.

\subsection{Transfer to MBPP+ and MATH}
\label{app:qwen25_transfer}

We report transfer results with DiffuCoder-7B-Instruct and Qwen2.5-72B-Instruct at $\gamma=16$ \cite{gong2025diffucoder,qwen2025qwen25technicalreport}. The conditional predictor was trained on HumanEval strict traces and evaluated on MBPP+ and MATH. The marginal predictor was trained on HumanEval and MATH strict traces and evaluated on MBPP+ only. Neither predictor used MBPP+ examples, and all policy thresholds were selected on HumanEval. For MBPP+, the HumanEval indicator is 1 and the MATH indicator is 0. For the conditional checkpoint, normalization maps the indicator inputs to zero to numerical precision on both HumanEval and MATH.

We evaluate all 378 MBPP+ problems
\cite{austin2021programsynthesislargelanguage,NEURIPS2023_43e9d647} and the first 300 problems, in their original order, from the MATH-500 test split \cite{NEURIPS_DATASETS,lightman2023letsverifystepstep}. All runs use greedy decoding, two diffusion steps, $K_{\min}=6$, and four NVIDIA A100 80 GB GPUs. Qwen2.5 is sharded over all four GPUs using \texttt{device\_map=auto} and a 50 GiB memory limit on GPU 0. DiffuCoder and the predictor also run on GPU 0.
MBPP+ is evaluated with EvalPlus 0.3.1 and MBPP+ v0.2.0. For MATH, we extract the last balanced boxed expression, or the final signed integer or decimal if no boxed expression is present. We normalize LaTeX wrappers and spacing, currency and percent notation, and numeric formatting before exact string comparison. We do not use symbolic equivalence checking. Throughput is not directly comparable with the Qwen3 experiments because the model pair and hardware configuration differ.

The MATH settings were fixed before evaluation. Raw confidence uses $S_{\max}=24$, whereas conditional survival uses $S_{\max}=32$, so these rows are not a controlled comparison.

\begin{table*}[t]
\centering
\caption{
Transfer results with DiffuCoder-7B-Instruct and Qwen2.5-72B-Instruct at $\gamma=16$. Score is EvalPlus pass@1 on the plus tests for MBPP+ and normalized answer exact match for the MATH subset. Calls are normalized by Strict SDD. Shadow verification is excluded from calls and throughput.
}
\label{tab:qwen25_transfer}
\small
\setlength{\tabcolsep}{5pt}
\begin{tabular}{@{}llccccc@{}}
\toprule
Dataset & Method & $S_{\max}$ & Score & Tokens/s & Relative calls & Strict agreement \\
\midrule
\multirow{4}{*}{MBPP+}
& Strict SDD           & -- & 0.7593 & 23.03 & 1.0000 & 1.0000 \\
& Raw confidence       & 32 & 0.7646 & 25.64 & 0.8318 & 0.7503 \\
& Marginal survival    & 32 & 0.7434 & 24.14 & 0.9106 & 0.9344 \\
& Conditional survival & 32 & 0.7513 & 23.35 & 0.9577 & 0.9544 \\
\midrule
\multirow{3}{*}{MATH subset}
& Strict SDD           & -- & 0.4733 & 15.85 & 1.0000 & 1.0000 \\
& Raw confidence       & 24 & 0.4767 & 16.53 & 0.9359 & 0.9656 \\
& Conditional survival & 32 & 0.4733 & 16.26 & 0.9636 & 0.9673 \\
\bottomrule
\end{tabular}
\end{table*}

On MBPP+, raw confidence gives the largest call reduction but the lowest strict agreement. Using 10,000 paired bootstrap resamples of the 378 tasks, the 95\% percentile intervals for the pass@1 difference from Strict SDD are $[-0.0079,0.0185]$ for raw confidence, $[-0.0344,0.0026]$ for marginal survival, and $[-0.0212,0.0053]$ for conditional survival. All three include zero. On the MATH subset, all scores are within $0.01$ of Strict SDD, while relative verifier calls fall by $3.6\%$ to $6.4\%$.

\subsection{Baselines and Sweep Settings}
\label{app:baseline_details}

For lenience, let $z_j(v)$ be the target logit for token $v$ at position $j$. A drafted token $\hat{x}_j$ is accepted when
\begin{equation}
z_j(\hat{x}_j)-\max_v z_j(v) \geq \log \ell.
\label{eq:lenience_rule}
\end{equation}
The decoder accepts the longest prefix satisfying this condition and uses the target argmax at the first rejection. A fully accepted block uses the same target bonus token as Strict SDD. At $\ell=1$, we use the Strict SDD argmax comparison so that ties are handled identically.

Top-$k$ acceptance takes the longest draft prefix for which every drafted token is among the $k$ largest target logits at its position. At the first rejected position, it uses the target argmax. A fully accepted block uses the target bonus token. Both lenience and top-$k$ acceptance call the target model in every round. Longer accepted prefixes can still reduce verifier calls per generated token.

All main HumanEval confidence sweeps use $S_{\max}=64$ and, except in the $K_{\min}$ diagnostic, $K_{\min}=6$. A skip also requires $\widehat K>0$, so setting $K_{\min}=0$ removes only the minimum-length constraint. All three skip policies reject prefixes containing forbidden special tokens, unverified end-of-sequence tokens, repetition, invalid scores, or shortened final blocks. These guards are not applied to lenience or top-$k$ acceptance because they call the target model in every round.

\begin{table*}[t]
\centering
\caption{Hyperparameter grids used in the reported sweeps.}
\label{tab:sweep_grids}
\small
\setlength{\tabcolsep}{5pt}
\begin{tabular}{@{}lp{0.76\textwidth}@{}}
\toprule
Method & Values \\
\midrule
Lenience &
$\ell\in\{1.0,0.99,0.975,0.95,0.90,0.80,0.70,0.50,0.30\}$ \\
Top-$k$ &
$k\in\{2,3,5\}$; $k=1$ is Strict SDD \\
Raw confidence &
$\eta_b\in\{0.90,0.95\}$ and
$\eta_c\in\{0.995,0.99,0.98,0.97,0.95,0.93,0.90\}$, using every pairing;
additional pairs $(0.80,0.93)$, $(0.80,0.85)$, $(0.70,0.70)$, and
$(0.60,0.50)$ \\
Marginal survival &
$\eta_m\in\{0.85,0.90,0.93,0.95,0.97,0.98,0.99\}$ \\
Conditional survival &
$\eta_b=0.90$ and
$\eta_c\in\{0.99,0.98,0.97,0.95,0.93,0.90,0.85,0.80\}$ \\
\bottomrule
\end{tabular}
\end{table*}


\subsection{Predictor Details}
\label{app:predictor_details}

Raw confidence is the probability assigned to a selected token in the DiffuCoder forward pass in which that token is revealed. Because drafting uses two diffusion steps, positions may receive their scores at different steps.

The learned predictors are trained only on blocks from Strict SDD trajectories. Each predictor takes the 3,584-dimensional input to the LM head at the reveal step and 19 scalar features: reveal step, normalized reveal rank, reveal parallelism, a late-reveal indicator, top-1 probability and its logarithm, entropy and its negative, probability margin, top-5 probability mass, top-1 probability and negative entropy at diffusion step 0, their changes from diffusion step 0 to the reveal step, block position, generated length, segment position, and dataset indicators for HumanEval and MATH. The dataset indicators identify datasets rather than individual problems and are constant in the main HumanEval experiments. No verifier hidden state is used.

The MLP architecture is $3603\rightarrow256\rightarrow256\rightarrow256\rightarrow1$. The hidden layers use GELU, with dropout $0.1$ after the first two layers. Inputs are normalized using only the training folds. We train for 15 epochs with binary cross entropy, Adam, a learning rate of $10^{-3}$, and a batch size of 4,096 tokens. We select the checkpoint with the lowest validation BCE. The online checkpoints use training seed 0.

Five folds are formed by shuffling the 164 prompt IDs with seed 12345. In rotation $r$, fold $r$ is used for testing, fold $(r+1)\bmod 5$ for validation, and the other three folds for training. Validation is used only for checkpoint selection, not for choosing policy thresholds or reported operating points.

\subsection{Evaluation and Throughput}
\label{app:throughput_details}

We generate one greedy completion for each of the 164 HumanEval problems, using temperature zero and decoding seed 0. The prompt requests the full function in a single Python code block. We extract and compile the code, then run the official HumanEval tests in a subprocess with a 10-second timeout. Pass@1 is the fraction of problems that pass.

DiffuCoder and Qwen3 use BF16 weights, and Qwen3 logits are converted to FP32 when computing log probabilities. Generation uses a batch size of one on two NVIDIA A100 80 GB GPUs. DiffuCoder and the predictor run on one GPU; Qwen3 runs on the other without model parallelism.

Before timing, each process runs one draft block and one target forward pass. The GPUs are synchronized before and after each timed decode. Time includes drafting, prediction, target verification, and decoder control, but excludes model loading, warmup, prompt construction, HumanEval execution, and shadow verification. Each verifier call recomputes the full current sequence without a KV cache, including previously skipped tokens. This work is included in the reported time.

Each prompt and policy setting is timed once. We aggregate throughput as $\sum_i N_i/\sum_i T_i$, where $N_i$ is the number of generated tokens and $T_i$ is the decode time for prompt $i$. All Qwen3 HumanEval runs, including the periodic baseline and top-$k$ extension, use this timing procedure.

\end{document}